\documentclass[12pt,a4]{article}


\usepackage{color,tikz}
\usepackage[unicode,bookmarks,bookmarksopen,bookmarksopenlevel=2,colorlinks,linkcolor=blue,citecolor=green]{hyperref}

\usepackage{amsmath,eucal,amssymb,amsthm,amsfonts}
\usepackage{mathrsfs,graphicx,texdraw,enumerate}

\newtheorem{remark}{Remark}
\DeclareMathOperator{\tr}{tr}
\DeclareMathOperator{\rank}{rank}
\DeclareMathOperator{\End}{End}
\DeclareMathOperator{\str}{str}
\DeclareMathOperator{\sdet}{sdet}
\DeclareMathOperator{\diag}{diag}

\newcommand{\field}[1]{\mathbb{#1}}

\numberwithin{equation}{section}

\def\openone{\leavevmode\hbox{\small1\kern-3.3pt\normalsize1}}

\def\diag{\mbox{diag}\,}
\def\tr{\mbox{tr}\,}

\textwidth=17cm  \hoffset=-1.5cm \textheight=24.5cm \voffset=-3cm
\newtheorem{proposition}{Proposition}

\arraycolsep=2pt
\begin{document}

\begin{center}
{\LARGE \bf Grassmann extensions of Yang-Baxter maps}

\bigskip

{\bf G. G. Grahovski$^{a,b,}$\footnote{E-mail: {\tt grah@essex.ac.uk}}, S. Konstantinou-Rizos$^{c,d,}$\footnote{E-mail: {\tt skonstantin@chesu.ru}} and  A. V. Mikhailov$^{d,}$\footnote{E-mail: {\tt A.V.Mikhailov@leeds.ac.uk}}}

\end{center}

\medskip

\noindent
{\it $^a$ Department of Mathematical Sciences, University of Essex,
CO4 3SQ Colchester,  UK}\\
{\it $^b$ Institute of Nuclear Research and Nuclear Energy,  Bulgarian Academy of Sciences,\\ 72 Tsarigradsko chausee, Sofia 1784, Bulgaria }\\
{\it $^c$ Institute of Mathematical Physics \& Seismodynamics, Chechen State University,\\ 33 Kievskaya str, 364907 Grozny, Russia }\\
{\it $^d$ Department of Applied Mathematics, University of Leeds,  Leeds, LS2 9JT, UK}\\

\begin{abstract}
\noindent
In this paper we show that there are explicit Yang-Baxter maps with Darboux-Lax representation between Grassmann extensions of algebraic varieties. Motivated by some recent results on noncommutative extensions of Darboux transformations, we first derive a Darboux matrix associated with the Grassmann-extended derivative Nonlinear Schr\"odinger (DNLS) equation, and then we deduce novel endomorphisms of Grassmann varieties, which possess the Yang-Baxter property. In particular, we present ten-dimensional maps which can be restricted to eight-dimensional Yang-Baxter maps on invariant leaves, related to the Grassmann-extended NLS and DNLS equations. We consider their vector generalisations.
\end{abstract}

\section{Introduction}
The Yang-Baxter (YB) equation has a fundamental role in the theory of quantum and classical integrable systems. In particular, the set-theoretical solutions of the YB equation, have been of great interest for several researchers in the area of Mathematical Physics. The consideration of such solutions was formally proposed by Drinfeld in \cite{Drinfel'd}, although the first examples of such solutions appeared in \cite{Sklyanin}. Moreover, the study of the set-theoretical solutions gained a more algebraic flavour in \cite{Buchstaber}. We refer to these solutions using the shorter term ``Yang-Baxter maps'' which was proposed by Veselov in \cite{Veselov}. YB maps are related to several concepts of integrability as, for instance, the multidimensionally consistent equations \cite{ABS-2004, ABS-2005, Bobenko-Suris,Frank4}. Of particular interest are those Yang-Baxter maps which admit Lax representation \cite{Veselov2}. They are connected with integrable mappings \cite{Veselov, Veselov3} and they are also related to integrable partial differential equations via Darboux transformations \cite{Sokor-Sasha}.

Moreover, noncommutative extensions of integrable equations have been of great interest over the last decades \cite{Chain-Kulish, Dimakis}. Darboux transformations for noncommutative-extended integrable equations were recently constructed; in the case of Grassman-extended NLS equation in \cite{Georgi-Sasha} and for the supersymmetric KdV  equation \cite{Liu,Liu2,Zhou} and the AKNS system \cite{Liu3}. At the same time, the derivation of noncommutative versions of YB maps has gained its interest \cite{Doliwa}.

In this paper, we make the first attempt to extend the theory of YB maps in the case of Grassmann algebras; in particular, we study the Grassmann extensions of the YB maps related to the Nonlinear Schr\"odinger equation and the derivative Nonlinear Schr\"odinger equation which have recently appeared in \cite{Sokor-Sasha}.

The paper is organised as follows. The second section deals with parametric YB maps and their Lax representations. Moreover, we present some basic properties of Grassmann algebras in order to make this text self-contained, as well as some properties of YB maps which admit Lax representation. In section 3, following \cite{Georgi-Sasha}, we consider a noncommutative (Grassmann) extension of the Darboux transformation for the DNLS equation. In section 4, we employ all the Darboux matrices presented in section 3 and, from their associated refactorisation problems, we construct ten-dimensional YB maps. The entries of the considered Darboux matrices satisfy particular systems of differential-difference equations which possess first integrals. These integrals indicate that the associated YB maps can be restricted to eight-dimensional YB maps on invariant leaves. Moreover, we consider their vector generalisations. Finally, in section 5 we summarise our results and we present some ideas for future work.

\section{Preliminaries}
Let $A$ be an algebraic variety in $K^N$, where $K$ is any field of zero characteristic (such as $\field{C}$, $\field{R}$  or $\field{Q}$), and let $Y\in \End(A\times A)$ be a map $(x,y)\stackrel{Y}{\mapsto}(u(x,y),v(x,y))$. The map $Y$ is called a \textit{Yang-Baxter map} if it satisfies the following \textit{Yang-Baxter equation}
\begin{equation}\label{YB_eq1}
Y^{12}\circ Y^{13} \circ Y^{23}=Y^{23}\circ Y^{13} \circ Y^{12},
\end{equation}
where $Y^{ij}\in \End(A\times A\times A)$, $i,j=1,2,3$, $i\neq j$, are defined by the following relations
\begin{subequations}\label{Yij}
\begin{align}
Y^{12}(x,y,z)&=(u(x,y),v(x,y),z), \\
Y^{13}(x,y,z)&=(u(x,z),y,v(x,z)), \\
Y^{23}(x,y,z)&=(x,u(y,z),v(y,z)),
\end{align}
\end{subequations}
where $x,y,z\in A$.

A YB map $Y$ is called \textit{reversible} if the composition of $\tilde{Y}=\pi Y\pi$ (where $\pi\in\End(A\times A)$ is the permutation map $\pi(x,y)=(y,x)$) with $Y$ is the identity map, namely
\begin{equation}\label{reversible}
\tilde{Y}\circ Y=Id.
\end{equation}
Furthermore, we use the term \textit{parametric YB map} if two parameters $a,b\in K$ are involved in the definition of the YB map, namely we have a map of the following form
\begin{equation}
Y_{a,b}:(x,y)\mapsto (u(x,y;a,b),v(x,y;a,b)),
\end{equation}
satisfying the \textit{parametric YB equation}
\begin{equation}\label{YB_eq}
Y^{12}_{a,b}\circ Y^{13}_{a,c} \circ Y^{23}_{b,c}=Y^{23}_{b,c}\circ Y^{13}_{a,c} \circ Y^{12}_{a,b}.
\end{equation}

\subsection{Grassmann extended varieties}
Here, we briefly present the basic properties of Grassmann algebras. For further details one could consult \cite{Berezin}. Let $G$ be a $\field{Z}_2$-graded algebra over $\field{C}$ or, in general, over a field $K$ of characteristic zero. Thus, as a linear space $G$  is a direct sum $G=G_0\oplus G_1$ (mod 2), such that $G_iG_j\subseteq G_{i+j}$. Those elements of $G$ that belong either to $G_0$ or to $G_1$ are called \textit{homogeneous}, the ones from $G_0$ are called \textit{even} (bosonic), while those in $G_1$ are called \textit{odd} (fermionic).

By definition, the parity $|a|$ of an even homogeneous element $a$ is $0$, and it is $1$ for odd homogeneous elements. The parity of the product $|ab|$ of two homogeneous elements is a sum of their parities: $|ab|=|a|+|b|$. Grassmann commutativity means that $ba=(-1)^{|a||b|}ab$ for any homogeneous elements $a$ and $b$. In particular, $\alpha^2=0$, for all odd elements $\alpha\in G_1$, and even elements commute with all the elements of $G$.

\begin{remark}\normalfont
In the rest of this paper we shall be using Latin letters for even variables, and Greek letters when referring to the odd ones; yet, we shall be using the Greek letter $\lambda$ when referring to the spectral parameter, despite the fact that $\lambda$ is an even variable.
\end{remark}

A \textit{Grassmann extension of an algebraic variety}, $V_G(p_1,\ldots,p_k)$, can be defined similarly to the commutative case:
{\small \begin{equation}
V_G(p_1,\ldots,p_k):=\{a_1,\ldots,a_n\in G_0, \alpha_1,\ldots,\alpha_m\in G_1|p_1=\dots p_k=0,p_i\in \field{C}\left[a_1,\ldots,a_n,\alpha_1,\ldots,\alpha_m\right]\}
\end{equation}}

\subsubsection{Supertrace and superdeterminant}
Let $M$ be a square matrix of the following form
\begin{equation}\label{G-matrixForm}
M=\left(
\begin{matrix}
 P & \Pi \\
 \Lambda & L
\end{matrix}\right),
\end{equation}
where $P$ and $L$ are square matrices of even entries,\index{even variable(s)} whereas $\Pi$ and $\Lambda$ are matrices with odd entries,\index{odd variable(s)} not necessarily square.

We define the \textit{supertrace} \index{supertrace} of $M$ --and we shall denote it by $\str(M)$-- to be the following quantity
\begin{equation}
\str(M)=\tr (P)-\tr (L),
\end{equation}
where $\tr(.)$ is the usual trace of a matrix.

Moreover, we define the \textit{superdeterminant}\index{superdeterminant} of $M$ --and we shall denote it by $\sdet(M)$-- to be
\begin{equation}
\sdet(M)=\det(P-\Pi L^{-1}\Lambda)\det(L^{-1})=\det(P^{-1})\det(L-\Lambda P^{-1}\Pi),
\end{equation}
where $\det(.)$ is the usual determinant of a matrix.

\subsection{Lax representations of YB maps}
Following Suris and Veselov in \cite{Veselov2}, we call a \textit{Lax matrix for a parametric YB map} a square matrix, $L=L(x,\chi;a,\lambda)$, depending on an even variable $x$, an odd variable $\chi$, a parameter $a$ and a \textit{spectral parameter} $\lambda$, such that the \textit{Lax-equation}
\begin{equation}\label{eqLax}
L(u,\xi;a)L(v,\eta;b)=L(y,\psi;b)L(x,\chi;a)
\end{equation}
is satisfied whenever $(u,\xi,v,\eta)=Y_{a,b}(x,\chi,y,\psi)$. Equation (\ref{eqLax}) is also called a \textit{refactorisation problem}.

If the Lax-equation (\ref{eqLax}) has a unique solution, namely it is equivalent to a map
\begin{equation}\label{unique-sol}
  (u,v,\xi,\eta)=Y_{a,b}(x,y,\chi,\psi),
\end{equation}
then the Lax matrix $L$ is said to be \textit{strong} \cite{Kouloukas2}. In this case (\ref{unique-sol}) is a Yang-Baxter map and it is reversible \cite{Veselov3}.

Now, since the Lax equation (\ref{eqLax}) has the obvious symmetry
\begin{equation}\label{symL}
(u,\,\xi,\,v,\,\eta,\,a,\,b) \longleftrightarrow  (y,\,\psi,\,x ,\,\chi,\,b,\,a)
\end{equation}
we have the following
\begin{proposition} \label{rationality}
If a matrix refactorisation problem (\ref{eqLax})\index{refactorisation problem(s)} yields a rational map $(x,\chi,y,\psi)=Y_{a,b}(u,\xi, v,\eta)$, then this map is birational.
\end{proposition}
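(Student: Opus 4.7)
The plan is to exploit the manifest symmetry (\ref{symL}) of the refactorisation problem. Equation (\ref{eqLax}) is invariant under the simultaneous swap $(u,\xi,v,\eta,a,b)\leftrightarrow(y,\psi,x,\chi,b,a)$, since that substitution exchanges its left- and right-hand sides as formal matrix products. Consequently, any rational formula that solves the equation by expressing one quadruple of variables in terms of the other one pulls back, via this substitution, to a rational formula running in the opposite direction. This is the sole structural input I need.

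Concretely, I would denote points of $A\times A$ by pairs $X=(x,\chi)$, $Y=(y,\psi)$, $U=(u,\xi)$, $V=(v,\eta)$, and let $\pi\colon(X,Y)\mapsto(Y,X)$ be the permutation. Suppose $Y_{a,b}$ is the rational map supplied by the hypothesis, so that $Y_{a,b}(X,Y)=(U,V)$ holds identically along solutions of (\ref{eqLax}). Applying the substitution (\ref{symL}) to this identity, the same rational expression now reads $Y_{b,a}(V,U)=(Y,X)$, which may be rewritten as
\begin{equation*}
Y_{b,a}\circ\pi\circ Y_{a,b}=\pi.
\end{equation*}
Composing with $\pi$ on the left and using $\pi\circ\pi=\mathrm{Id}$ yields $(\pi\circ Y_{b,a}\circ\pi)\circ Y_{a,b}=\mathrm{Id}$. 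Running exactly the same argument starting from $Y_{b,a}$ in place of $Y_{a,b}$ (i.e.\ swapping the roles of $a$ and $b$) delivers the companion identity $Y_{a,b}\circ(\pi\circ Y_{b,a}\circ\pi)=\mathrm{Id}$, so that $\pi\circ Y_{b,a}\circ\pi$ is a genuine two-sided inverse of $Y_{a,b}$. Since $Y_{b,a}$ and $\pi$ are both rational, so is this inverse, which proves birationality.

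The only delicate point I foresee is the bookkeeping of the substitution (\ref{symL}): one must carefully track how the swap acts on each even and odd variable, and on both spectral and deformation parameters, as it propagates through the rational expressions defining $Y_{a,b}$. In particular, one should check that signs arising from the Grassmann commutation rules are preserved by the interchange, so that the formal substitution does yield an honest rational map into the Grassmann-extended variety rather than merely a formal identity of expressions. Once the substitution is executed consistently, the explicit formula for the inverse is read off and the claim follows.
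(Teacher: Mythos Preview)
Your argument is correct and is essentially the paper's own proof, just packaged more abstractly: both rely solely on the symmetry (\ref{symL}) to produce the inverse. The paper spells out the rational components $n_i/d_i$ and writes the inverse explicitly as the same fractions with the substitution $(x,\chi,y,\psi;a,b)\to(v,\eta,u,\xi;b,a)$, whereas you encode the identical observation as the formula $Y_{a,b}^{-1}=\pi\circ Y_{b,a}\circ\pi$; these are the same content.
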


\begin{proof} Let $Y:(x,\chi,y,\psi)\mapsto (u,\xi,v,\eta)$ be a rational map corresponding to a refactorisation problem\index{refactorisation problem(s)} (\ref{eqLax}), i.e.
\begin{subequations}
\begin{align}
x\mapsto u&=\frac{n_1(x,\chi,y,\psi;a,b)}{d_1(x,\chi,y,\psi;a,b)}, \qquad y\mapsto v=\frac{n_2(x,\chi,y,\psi;a,b)}{d_2(x,\chi,y,\psi;a,b)},\\
\chi\mapsto \xi &=\frac{n_3(x,\chi,y,\psi;a,b)}{d_3(x,\chi,y,\psi;a,b)}, \qquad \psi\mapsto \eta=\frac{n_4(x,\chi,y,\psi;a,b)}{d_4(x,\chi,y,\psi;a,b)},
\end{align}
\end{subequations}
where $n_i$, $d_i$, $i=1,2,3,4$, are polynomial functions of their variables.

Due to the symmetry (\ref{symL}) of the refactorisation problem (\ref{eqLax}),\index{refactorisation problem(s)} the inverse map of $Y$, $Y^{-1}:(u,\xi,v,\eta)\mapsto (x,\chi,y,\psi)$, is also rational and it is given by
\begin{subequations}
\begin{align}
u\mapsto x&=\frac{n_1(v,\eta,u,\xi;b,a)}{d_1(v,\eta,u,\xi;b,a)}, \qquad v\mapsto y=\frac{n_2(v,\eta,u,\xi;b,a)}{d_2(v,\eta,u,\xi;b,a)},\\
\xi\mapsto \chi&=\frac{n_3(v,\eta,u,\xi;b,a)}{d_3(v,\eta,u,\xi;b,a)}, \qquad \eta\mapsto \psi=\frac{n_4(v,\eta,u,\xi;b,a)}{d_4(v,\eta,u,\xi;b,a)},
\end{align}
\end{subequations}
Therefore, $Y$ is a birational map.
\end{proof}

\begin{remark}\normalfont
Functions $d_i(x,\chi,y,\psi;a,b)$, $i=1,2,3,4$, must be non-nilpotent even-valued.
\end{remark}

\begin{proposition}\label{genInv}
If $L=L(x,\chi,a;\lambda)$ is a Lax matrix with corresponding YB map $Y:(x,\chi,y,\psi)\mapsto (u,\xi,v,\eta)$, then $\str(L(y,\psi,b;\lambda)L(x,\chi,a;\lambda))$ is a generating function of invariants of the YB map.
\end{proposition}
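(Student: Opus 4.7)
The plan is to combine the Lax equation (\ref{eqLax}) with the cyclic property of the supertrace. Write $L_1=L(x,\chi,a;\lambda)$, $L_2=L(y,\psi,b;\lambda)$, $\tilde L_1=L(u,\xi,a;\lambda)$, $\tilde L_2=L(v,\eta,b;\lambda)$, so the refactorisation problem reads $\tilde L_1\tilde L_2=L_2 L_1$. Applying $\str$ to both sides gives
\begin{equation*}
\str\bigl(\tilde L_1\tilde L_2\bigr)=\str\bigl(L_2 L_1\bigr).
\end{equation*}
Then invoke the cyclic property of the supertrace. For two homogeneous supermatrices $A,B$ one has $\str(AB)=(-1)^{|A||B|}\str(BA)$; in the present setting each Lax matrix is an even (grade $0$) element of the graded matrix algebra, since its diagonal blocks carry even entries and its off-diagonal blocks carry odd entries in accordance with (\ref{G-matrixForm}). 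Hence $\str(\tilde L_1\tilde L_2)=\str(\tilde L_2\tilde L_1)$, and the above identity becomes
\begin{equation*}
\str\bigl(L(v,\eta,b;\lambda)L(u,\xi,a;\lambda)\bigr)=\str\bigl(L(y,\psi,b;\lambda)L(x,\chi,a;\lambda)\bigr),
\end{equation*}
which is precisely the statement that the function $T(x,\chi,y,\psi;\lambda):=\str(L(y,\psi,b;\lambda)L(x,\chi,a;\lambda))$ is invariant under the YB map $Y_{a,b}:(x,\chi,y,\psi)\mapsto(u,\xi,v,\eta)$.

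To conclude that $T$ is a \emph{generating function} of invariants, I would expand it as a (Laurent) polynomial in the spectral parameter $\lambda$,
\begin{equation*}
T(x,\chi,y,\psi;\lambda)=\sum_{k} I_k(x,\chi,y,\psi)\,\lambda^{k},
\end{equation*}
and observe that, since the identity above holds for all admissible values of $\lambda$, each coefficient $I_k$ is separately invariant under $Y_{a,b}$. This produces a family of scalar first integrals parametrised by $\lambda$, which is what is meant by ``generating function of invariants''.

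The only nontrivial point is the use of cyclicity of the supertrace; this must be justified carefully in the $\field Z_2$-graded setting, because cyclicity fails in general for odd supermatrices. In the applications considered in the paper this is harmless: the Darboux/Lax matrices appearing in sections 3–4 are block $2\times2$ supermatrices of the form (\ref{G-matrixForm}) with even blocks $P,L$ and odd blocks $\Pi,\Lambda$, so they are even elements of the graded algebra and the cyclic property $\str(AB)=\str(BA)$ applies without sign ambiguity. Beyond this graded-cyclicity check, the argument is a direct transcription of the classical proof in the commutative case.
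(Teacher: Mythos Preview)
Your proof is correct and follows essentially the same route as the paper: apply the supertrace to the refactorisation identity, use cyclicity of $\str$, and then expand in powers of $\lambda$ to extract the individual invariants $I_k$. The only cosmetic difference is that you apply cyclicity on the $(u,\xi,v,\eta)$ side while the paper applies it on the $(x,\chi,y,\psi)$ side; you are also more explicit than the paper in justifying why cyclicity of $\str$ holds here (the Lax matrices are even supermatrices of the form (\ref{G-matrixForm})), which the paper's proof simply takes for granted.
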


\begin{proof}
Since,
\begin{eqnarray}
\str(L(u,\xi,a;\lambda)L(v,\eta,b;\lambda))&\overset{(\ref{eqLax})}{=}&\str(L(y,\psi,b;\lambda)L(x,\chi,a;\lambda))\nonumber\\
&=& \str(L(x,\chi,a;\lambda)L(y,\psi,b;\lambda)),\label{strace}
\end{eqnarray}
and function $\str(L(x,\chi,a;\lambda)L(y,\psi,b;\lambda))$ can be written as $\str(L(x,\chi,a;\lambda)L(y,\psi,b;\lambda))=\\ \sum_k \lambda^k I_k(x,\chi,y,\psi;a,b)$, from (\ref{strace}) follows that
\begin{equation}
I_i(u,\xi,v,\eta;a,b)=I_i(x,\chi,y,\psi;a,b),
\end{equation}
which are invariants for $Y$.
\end{proof}

Matrix $L(x,\chi,a;\lambda)L(y,\psi,b;\lambda)$ is called the \textit{monodromy matrix}.

\begin{remark} \normalfont
The invariants of a YB map, $I_i(x,\chi,y,\psi;a,b)$, may not be functionally independent.
\end{remark}


\section{Grassmann extensions of Darboux transformations}
Let $L$ be a Lax operator of the following form
\begin{equation}
L(p,q,\theta,\phi;\lambda)=D_x+U(p,q,\phi,\theta;\lambda),
\end{equation}
where $U$ is a matrix depending on two even potentials, $p=p(x)$ and $q=q(x)$, two odd potentials, $\theta=\theta(x)$ and $\phi=\phi(x)$, a spectral parameter $\lambda$ and a variable $x$ implicitly through the potentials. In all our cases the dependence on the spectral parameter is polynomial.

By Darboux transformation we understand a map of the following form
\begin{equation}\label{defDarboux}
L\rightarrow \tilde{L}=MLM^{-1},
\end{equation}
where $\tilde{L}$ is $L$ updated with potentials $p_{10}=p_{10}(x)$, $q_{10}=q_{10}(x)$, $\theta_{10}=\theta_{10}(x)$ and $\phi_{10}=\phi_{10}(x)$, namely $\tilde{L}=L(p_{10},q_{10},\theta_{10}, \phi_{10};\lambda)$. The matrix $M$ in (\ref{defDarboux}) is called  \textit{Darboux matrix}. Here, we shall be assuming that the matrix $M$ has the same $\lambda$-dependence with $U$. Moreover, we define the \textit{rank} of a Darboux transformation to be the rank of the matrix which appears as coefficient of the highest power of the spectral parameter.

In this section we consider the Grassmann extensions of the Darboux matrices corresponding to the NLS equation (see \cite{Georgi-Sasha}) and the DNLS equation, which we shall use to construct YB maps.

\subsection{Nonlinear Schr\"odinger equation}
The Grassmann extension of the Darboux matrix for the NLS equation was constructed in \cite{Georgi-Sasha}. In particular, the following noncommutative extension\footnote{If one sets the odd variables equal to zero, the obtained operator corresponds to the spatial part of the Lax pair for the NLS equation.} of the NLS operator
\begin{subequations}\label{G-NLS-U}
\begin{align}
&\mathcal{L} := D_x + U(p,q,\psi,\phi,\zeta,\kappa;\lambda)=D_x +\lambda U^{1}+U^{0},
\intertext{was considered, where $U^{1}$ and $U^0$ are given by}
&U^1={\rm{diag}}(1,-1,0),\quad U^0=\left(\begin{array}{ccc} 0 & 2p & \theta \\ 2q & 0 & \zeta \\ \phi & \kappa & 0\end{array}\right),
\end{align}
\end{subequations}
where $p,q\in G_0$ and $\psi,\phi,\zeta,\kappa \in G_1$.

It was shown that all the Darboux transformations of rank 1 associated to this operator are described by the following matrix
\begin{equation}\label{DarbouxNLS}
M(p,q,\theta,\phi;c_1,c_2)=\left(
\begin{matrix}
 F+\lambda & p & \theta\\
 q_{10} & c_1 & 0\\
 \phi_{10} & 0 & c_2
\end{matrix}\right),
\end{equation}
where $c_1$ and $c_2$ can be either 1 or 0. In the case where $c_1=c_2=1$, the entries of $M(p,q,\theta,\phi;1,1)$ satisfy the following system of differential-difference equations
\begin{subequations}\label{DarbouxNLS-sys}
\begin{align}
F_x&=2(pq-p_{10}q_{10})+\theta \phi-\theta_{10}\phi_{10},\\
p_x&=2(Fp-p_{10})+\theta \zeta,\\
q_{10,x}&=2(q-q_{10}F)-\kappa_{10} \phi_{10},\\
\theta_x&=F\theta-\theta_{10}+p\kappa,\\
\phi_{10,x}&=\phi-\phi_{10}F-\zeta_{10}q_{10},
\end{align}
\end{subequations}
and the algebraic equations
\begin{subequations}
\begin{align}
\theta q_{10}&=(\mathcal{S}-1)\kappa,\\
\phi_{10}p&=(\mathcal{S}-1)\zeta.
\end{align}
\end{subequations}
Moreover, system (\ref{DarbouxNLS-sys}) admits the following first integral
\begin{equation}\label{1integralNLS}
\partial_x(F-pq_{10}-\phi_{10}\theta)=0,
\end{equation}
which implies that $\partial_x(\sdet(M))=0$, since $\sdet(M)=\lambda+F-pq_{10}-\phi_{10}\theta$.

\subsection{Derivative nonlinear Schr\"odinger equation}
Let us now consider the Lax operator given by
\begin{subequations} \label{G-SL2-Lax-Op}
\begin{align}
&\mathcal{L}=D_x+\lambda^{2} U^2+\lambda U^1,
\intertext{where $U_1$ and $U_2$ are given by}
U^2=&\mathfrak{s}_3, \quad \quad U^1=\left(\begin{matrix}
 0 & 2p & 2\theta\\
 2q & 0 & 0\\
 2\phi & 0 & 0
\end{matrix}\right),
\end{align}
\end{subequations}
and $\mathfrak{s}_3=\diag(1,-1,-1)$. The potentials $p$ and $q$ and the spectral parameter $\lambda$ are even, whereas the potentials $\phi$ and $\theta$ are odd. Moreover, the operator (\ref{G-SL2-Lax-Op}) is invariant under the transformation
\begin{equation}\label{reduction-Z2-G}
s_1(\lambda): \mathcal{L}(\lambda) \rightarrow \mathcal{L}(-\lambda)=\mathfrak{s}_{3}\mathcal{L}(\lambda) \mathfrak{s}_{3}.
\end{equation}
We seek a rank 1 Darboux matrix of the following form
\begin{equation}\label{sqformM}
M=\lambda^2 M_2+\lambda M_1+M_0,
\end{equation}
where $M_i$, $i=0,1,2$, is a $3\times 3$ matrix, and we assume that $M$ possesses the same symmetry, (\ref{reduction-Z2-G}), as the Lax operator (\ref{G-SL2-Lax-Op}), namely
\begin{equation}\label{symmM}
M(-\lambda)=\mathfrak{s}_3 M(\lambda) \mathfrak{s}_3,
\end{equation}
as in the commutative case in \cite{SPS}. Therefore, for the entries of matrices $M_i$, $i=0,1,2$, we have
\begin{subequations}\label{form-M-Z2-G}
\begin{align}
M_{i,12}&=M_{i,13}=M_{i,21}=M_{i,31}=0,\quad i=0,2,\quad\text{and}\\
M_{1,11}&=M_{1,22}=M_{1,33}=M_{1,23}=M_{1,32}=0.
\end{align}
\end{subequations}
Now, the definition (\ref{defDarboux}) implies a second order algebraic equation in $\lambda$. Equating the coefficients of different powers of $\lambda$   to zero, we obtain the following system of equations
\begin{subequations}\label{Dar-eq}
\begin{align}
\left[U^2,M_2\right]&=0 \label{Dar-eq1} \\
\left[U^2,M_1\right]+U_{10}^1M_2-M_2U^1&=0 \label{Dar-eq2}\\
M_{2,x}+\left[U^2,M_0\right]+U_{10}^1M_1-M_1U^1&=0 \label{Dar-eq3}\\
M_{1,x}+U_{10}^1M_0-M_0U^1&=0 \label{Dar-eq4}\\
M_{0,x}&=0.\label{Dar-eq5}
\end{align}
\end{subequations}
Equation (\ref{Dar-eq1}) is  satisfied identically, whereas (\ref{Dar-eq5}) implies that matrix $M_0$ must be constant. Moreover, since $\rank{M_2=1}$, we can choose $M_2=\diag\{f,0,0\}$. In this case, from equation (\ref{Dar-eq2}) we have that the entries of $M_1$ are given by
\begin{equation}
M_{1,12}=fp,\quad M_{1,13}=f\theta,\quad M_{1,21}=q_{10}f \quad\text{and}\quad M_{1,31}=\phi_{10}f.
\end{equation}
Moreover, from equation (\ref{Dar-eq3}) we deduce equation
\begin{equation}\label{dif-difEqs-1}
f_x=2f(pq-p_{10}q_{10}+\theta\phi-\theta_{10}\phi_{10}),
\end{equation}
Therefore, matrix $M$ is of the form:
\small
\begin{equation}\label{DarbouxZ2-M}
M(f,p,q_{10},\theta,\phi_{10};c_1,c_2)=\lambda^2 \left(
\begin{matrix}
 f & 0 & 0\\
 0 & 0 & 0\\
 0 & 0 & 0
\end{matrix}\right)+\lambda \left(
\begin{matrix}
 0 & f p & f \theta \\
 q_{10} f & 0 &0 \\
 \phi_{10} f & 0 & 0
\end{matrix}\right)
+\left(
\begin{matrix}
 c_1 & 0 & 0 \\
 0 & c_2 &0 \\
 0 & 0 & 1
\end{matrix}\right),
\end{equation}
\normalsize
where we have chosen the constant matrix $M_0$ to be diagonal, namely of the form $M_0=\diag(c_1,c_2,1)$ (one of the parameters along its diagonal can be rescaled to 1).

Finally, due to (\ref{Dar-eq4}), the entries of the Darboux matrix must satisfy the following system of equations
\begin{subequations}\label{dif-difEqs}
\begin{align}
p_x&=2p(p_{10}q_{10}-pq+\theta_{10}\phi_{10}-\theta\phi)-2\frac{c_2p_{10}-c_1p}{f},\label{dif-difEqs-2}\\
q_{10,x}&=2q_{10}(p_{10}q_{10}-pq+\theta_{10}\phi_{10}-\theta\phi)-2\frac{c_1q_{10}-c_2q}{f},\\
\theta_x&=2\phi(p_{10}q_{10}-pq+\theta_{10}\phi_{10}-\theta\phi)+2\frac{c_1\theta-\theta_{10}}{f},\\
\phi_{10,x}&=2\phi_{10}(p_{10}q_{10}-pq+\theta_{10}\phi_{10}-\theta\phi)+2\frac{\phi-c_1\phi_{10}}{f}.\label{dif-difEqs-5}
\end{align}
\end{subequations}
where we have made use of (\ref{dif-difEqs-1}).

Thus, matrix $M$ given by (\ref{DarbouxZ2-M}) constitutes a Darboux matrix for the Lax operator (\ref{G-SL2-Lax-Op}), if its entries satisfy the system of equations system $\{(\ref{dif-difEqs-1}),(\ref{dif-difEqs})\}$. We can readily show that the latter system admits the following first integral
\begin{equation}\label{G-z2-1stInt}
\partial_x(c_2f-f^2(pq_{10}+c_2\theta\phi_{10}))=0,
\end{equation}
by straightforward calculation. Using the above first integral we can show that $\partial_x(\sdet M)=0$.

\begin{remark}\normalfont
The Darboux matrix (31) in \cite{SPS} constitutes the bosonic limit of (\ref{DarbouxZ2-M}).
\end{remark}


\section{Derivation Yang-Baxter maps}
In \cite{Sokor-Sasha} we considered the case of Darboux matrices associated with Lax operators of NLS type, which correspond to a recent classification of \textit{automorphic} Lie algebras \cite{BuryPhD,Bury-Sasha,Lombardo}. We used these Darboux matrices to construct six-dimensional YB maps together with their four-dimensional restrictions on invariant leaves.

In this paper, we are interested in the Grassmann extensions of these YB maps. In particular, we shall discuss the cases of the YB maps associated with the NLS equation \cite{ZS} and the DNLS equation \cite{Kaup-Newell}.


\subsection{NLS case}
According to (\ref{DarbouxNLS}) we define the following matrix
\begin{equation}\label{mLaxNLS}
M(\textbf{x};\lambda)=\left(
\begin{matrix}
 X+\lambda & x_1 & \chi_1\\
 x_2 & 1 &0 \\
 \chi_2 & 0 & 1
\end{matrix}\right), \qquad \textbf{x}:=(x_1,x_2,\chi_1,\chi_2,X),
\end{equation}
Then, we substitute $M$ to the Lax equation (\ref{eqLax}) which leads to a system of polynomial equations. The corresponding algebraic variety is a union of two ten-dimensional components. The first one is
obvious from the refactorisation problem, and it corresponds to the trivial Yang-Baxter map, while the second one corresponds to a non-trivial ten-dimensional Yang-Baxter map. In particular, we have the following.

\begin{proposition}
The matrix refactorisation problem
\begin{equation}\label{ref-eq-M}
M(\textbf{u};\lambda)M(\textbf{v};\lambda)=M(\textbf{y};\lambda)M(\textbf{x};\lambda),
\end{equation}
where $M=M(\textbf{x};\lambda)$ is given by (\ref{mLaxNLS}), yields the permutation map
\begin{equation}\label{trivialYB}
 \textbf{x}\mapsto \textbf{u}=\textbf{y}, \qquad \textbf{y}\mapsto \textbf{v}=\textbf{x}, \nonumber
\end{equation}
and the following ten-dimensional Yang-Baxter map
\begin{subequations}\label{NLS5D}
\begin{align}
 x_1\, \mapsto \, u_1&=y_1-\frac{X-x_1x_2-\chi_1\chi_2-Y+y_1y_2+\psi_1\psi_2}{1+x_1y_2+\chi_1\psi_2} x_1,\label{NLS5D-u1}\\
 x_2\, \mapsto \, u_2&=y_2,\\
 \chi_1\, \mapsto \, \xi_1&=\psi_1-\frac{X-x_1x_2-Y+y_1y_2+\psi_1\psi_2}{1+x_1y_2} \chi_1,\label{NLS5D-chi1}\\
 \chi_2\, \mapsto \, \xi_2&=\psi_2,\\
 X\, \mapsto \, U&=\frac{X-x_1x_2-\chi_1\chi_2+(x_1y_2+\chi_1\psi_2)Y+y_1y_2+\psi_1\psi_2}{1+x_1y_2+\chi_1\psi_2}, \\
 y_1\, \mapsto \, v_1&=x_1,\\
 y_2\, \mapsto \, v_2&=x_2+\frac{X-x_1x_2-\chi_1\chi_2-Y+y_1y_2+\psi_1\psi_2}{1+x_1y_2+\chi_1\psi_2} y_2,\label{NLS5D-v2}\\
 \psi_1\, \mapsto \, \eta_1&=\chi_1,\\
 \psi_2\, \mapsto \, \eta_2&=\chi_2+\frac{X-x_1x_2-\chi_1\chi_2-Y+y_1y_2}{1+x_1y_2} \psi_2,\label{NLS5D-eta2}\\
 Y\, \mapsto \, V&=\frac{(x_1y_2+\chi_1\psi_2)X+x_1x_2+\chi_1\chi_2+Y-y_1y_2-\psi_1\psi_2}{1+x_1y_2+\chi_1\psi_2},
\end{align}
\end{subequations}
which is non-involutive and birational.
\end{proposition}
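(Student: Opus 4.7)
The plan is to solve the refactorisation (\ref{ref-eq-M}) by expanding both sides in powers of $\lambda$ and equating coefficients. Writing $M(\textbf{x};\lambda)=\lambda E_{11}+M_0(\textbf{x})$ with $E_{11}=\diag(1,0,0)$, the product on the left expands as $\lambda^{2}E_{11}+\lambda\bigl(E_{11}M_0(\textbf{v})+M_0(\textbf{u})E_{11}\bigr)+M_0(\textbf{u})M_0(\textbf{v})$, and symmetrically for the right-hand side. The $\lambda^{2}$ coefficients agree automatically. The $\lambda^{1}$ coefficient fixes five of the ten unknowns as $v_1=x_1$, $\eta_1=\chi_1$, $u_2=y_2$, $\xi_2=\psi_2$ and $U+V=X+Y$, so that introducing $\delta:=Y-U=V-X$ reduces the remaining unknowns to $u_1,\xi_1,v_2,\eta_2,\delta$.

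Next I would exploit the constant matrix equation $M_0(\textbf{u})M_0(\textbf{v})=M_0(\textbf{y})M_0(\textbf{x})$. Its four off-diagonal entries in the first row and first column give $u_1=y_1+\delta x_1$, $\xi_1=\psi_1+\delta\chi_1$, $v_2=x_2-\delta y_2$, $\eta_2=\chi_2-\delta\psi_2$, while the four remaining entries in positions $(2,2),(2,3),(3,2),(3,3)$ are automatic. Substituting these into the $(1,1)$ entry and using that $\delta$ is even, one obtains the single scalar identity $\delta\bigl[\mathcal{N}-\delta\,\mathcal{D}\bigr]=0$ with $\mathcal{N}=Y-X+x_1x_2-y_1y_2+\chi_1\chi_2-\psi_1\psi_2$ and $\mathcal{D}=1+x_1y_2+\chi_1\psi_2$. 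This factorisation displays the refactorisation variety as a union of two irreducible components: the branch $\delta=0$ produces the permutation (\ref{trivialYB}), while on the other branch $\mathcal{D}$ has non-nilpotent body $1$ and is therefore invertible in $G$, yielding $\delta=\mathcal{N}\mathcal{D}^{-1}$ uniquely. Back-substitution reproduces the map (\ref{NLS5D}); the simpler denominator $1+x_1y_2$ appearing in (\ref{NLS5D-chi1}) and (\ref{NLS5D-eta2}) comes from the identities $\chi_1\mathcal{D}^{-1}=\chi_1(1+x_1y_2)^{-1}$ and $\psi_2\mathcal{D}^{-1}=\psi_2(1+x_1y_2)^{-1}$, forced by $\chi_1^{2}=\psi_2^{2}=0$, together with the vanishing of the cross terms $\chi_1\chi_2\cdot\chi_1$ and $\psi_1\psi_2\cdot\psi_2$ in the numerator.

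The Yang-Baxter property then follows from the standard Veselov-Suris refactorisation argument: both sides of (\ref{YB_eq}) correspond to two orderings of the same triple factorisation of $M(\textbf{x};\lambda)M(\textbf{y};\lambda)M(\textbf{z};\lambda)$, and on the non-trivial component the factorisation is generically unique (wherever the analogues of $\mathcal{D}$ are invertible), so the two compositions must coincide. Birationality is then immediate from Proposition \ref{rationality}, and non-involutivity can be verified either by direct computation of $Y\circ Y$ at a generic point or, more cheaply, by setting the odd variables to zero and recovering the known non-involutive six-dimensional NLS Yang-Baxter map of \cite{Sokor-Sasha}. The main obstacle throughout will be the Grassmann bookkeeping: tracking signs when reordering odd factors and repeatedly using the nilpotency relations to simplify $\mathcal{D}^{-1}$ is what produces the delicate asymmetry between the even and odd formulas in (\ref{NLS5D}), and is the most likely place for computational error.
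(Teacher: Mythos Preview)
Your proposal is correct and follows essentially the same route as the paper: both expand the refactorisation, immediately read off $v_1=x_1$, $\eta_1=\chi_1$, $u_2=y_2$, $\xi_2=\psi_2$, $U+V=X+Y$, express $u_1,\xi_1,v_2,\eta_2$ linearly in the single unknown $\delta=Y-U$, and substitute into the remaining $(1,1)$ equation to obtain the factored relation $\delta(\mathcal{N}-\delta\mathcal{D})=0$ that splits off the permutation branch. The only cosmetic differences are that the paper does not spell out the $\lambda E_{11}+M_0$ decomposition, and it simplifies the odd-variable formulas by multiplying numerator and denominator by the ``conjugate'' $1+x_1y_2-\chi_1\psi_2$ rather than by your direct observation $\chi_1\mathcal{D}^{-1}=\chi_1(1+x_1y_2)^{-1}$; these are equivalent. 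Your explicit Veselov--Suris justification of the Yang--Baxter property is a small addition the paper leaves implicit.
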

\begin{proof}
Equation (\ref{ref-eq-M}) implies that $v_1=x_1$, $\eta_1=\chi_1$, $u_2=y_2$, $\xi_2=\psi_2$, and the following system of equations
\begin{subequations}\label{eqs-ref-prob-M}
\begin{align}
U+V&=X+Y\label{eqs-ref-prob-M-a}\\
UV+u_1v_2+\xi_1\eta_2&=YX+y_1x_2+\psi_1\chi_2\label{eqs-ref-prob-M-b}\\
Ux_1+u_1=Yx_1+y_1,& \quad U\chi_1+\xi_1=Y\chi_1+\psi_1\label{eqs-ref-prob-M-c}\\
y_2V+v_2=y_2 X+x_2&, \quad \psi_2V+\eta_2=\psi_2X+\chi_2\label{eqs-ref-prob-M-d},
\end{align}
\end{subequations}
for $u_1, \xi_1, U, v_2, \eta_2$ and $V$. From (\ref{eqs-ref-prob-M-c})-(\ref{eqs-ref-prob-M-d}) using (\ref{eqs-ref-prob-M-a}), we can express all variables $u_1, \xi_1, v_2$, and $\eta_2$ in terms of $Y-U$, as:
\begin{subequations}\label{uxiveta}
\begin{align}
u_1&=(Y-U)x_1+y_1, \qquad \xi_1=(Y-U)\chi_1+\psi_1\label{uxiveta-a}\\
v_2&=x_2-y_2(Y-U), \qquad \eta_2=\chi_2-\psi_2(Y-U)\label{uxiveta-b}.
\end{align}
\end{subequations}
Now, substituting (\ref{uxiveta}) to (\ref{eqs-ref-prob-M-b}), using (\ref{eqs-ref-prob-M-a}), we obtain
\begin{equation}
(Y-U)\left[U(1+x_1y_2+\chi_1\psi_2)-X+x_1x_2+\chi_1\chi_2-(x_1y_2+\chi_1\psi_2)Y-y_1y_2-\psi_1\psi_2 \right]=0.
\end{equation}

From the above follows that either $U=Y$, which in view of (\ref{uxiveta}) and (\ref{eqs-ref-prob-M-a}) implies the permutation map (\ref{trivialYB}), or
\begin{equation}
U=\frac{X-x_1x_2-\chi_1\chi_2+(x_1y_2+\chi_1\psi_2)Y+y_1y_2+\psi_1\psi_2}{1+x_1y_2+\chi_1\psi_2}.
\end{equation}
Substitution of the latter to (\ref{uxiveta}), implies that $u_1$ and $v_2$ are given by (\ref{NLS5D-u1}) and (\ref{NLS5D-v2}), respectively, while $\xi_1$ and $\eta_2$ are given by
\begin{eqnarray*}
\xi_1&=&\psi_1-\frac{X-x_1x_2-\chi_1\chi_2-Y+y_1y_2+\psi_1\psi_2}{1+x_1y_2+\chi_1\psi_2}\chi_1\\
\eta_2&=&\chi_2+\frac{X-x_1x_2-\chi_1\chi_2-Y+y_1y_2+\psi_1\psi_2}{1+x_1y_2+\chi_1\psi_2}\psi_2.
\end{eqnarray*}
Now, in the above expressions we mutliply both the nominator and the denominator with the conjugate expression of the latter, namely ``$1+x_1y_2-\chi_1\psi_2$", and we use the fact that $\chi_1^2=\psi_2^2=0$. Then, $\xi_1$ and $\eta_2$ can be written in the form (\ref{NLS5D-chi1}) and (\ref{NLS5D-eta2}).

Finally, it can be readily verified by straightforward calculation that map (\ref{NLS5D}) is non-involutive, and its birationality is due to Prop. \ref{rationality}.
\end{proof}

\begin{remark}\normalfont
The bosonic limit of the above map (namely if we set the odd variables $\chi_1=\chi_2=\psi_1=\psi_2=0$) is map (4.7) in \cite{Sokor-Sasha}.
\end{remark}

\subsubsection{Restriction on invariant leaves: Extension of Adler-Yamilov map}
In this section, we derive an eight-dimensional Yang-Baxter map from map (\ref{NLS5D}), which is the Grassmann extension of the Adler-Yamilov map \cite{Adler-Yamilov, Kouloukas, PT}. Our proof is motivated by the existence of the first integral (\ref{1integralNLS}) for system (\ref{DarbouxNLS-sys}).

In particular, we have the following.

\begin{proposition}
\begin{enumerate}
	\item The quantities $\Phi =X-x_1x_2-\chi_1\chi_2$ and $\Psi=Y-y_1y_2-\psi_1\psi_2$ are invariants (first integrals) of the map (\ref{NLS5D}).
	\item The ten-dimensional map (\ref{NLS5D}) can be restricted to an eight-dimensional map, $Y_{a,b}\in\End\{A_a\times A_b\}$, given by
	\begin{subequations}\label{Adler-Yamilov-Grassmann}
\begin{align}
 \textbf{x}\, \mapsto \, \textbf{u}&=\left(y_1+\frac{(b-a)(1+x_1y_2-\chi_1\psi_2)}{(1+x_1 y_2)^2}x_1,\,\,y_2,\,\,\psi_1+\frac{b-a}{1+x_1y_2}\chi_1,\,\,\psi_2\right),  \\
 \textbf{y}\, \mapsto \, \textbf{v}&=\left(x_1,\,\,x_2+\frac{(a-b)(1+x_1y_2-\chi_1\psi_2)}{(1+x_1y_2)^2}y_2,\,\,\chi_1,\,\,\chi_2+\frac{a-b}{1+x_1y_2}\psi_2 \right),
\end{align}
\end{subequations}
	where $a,b\in G_0$ and $A_a$, $A_b$ are level sets of the first integrals $\Phi$ and $\Psi$, namely
\begin{equation}\label{symleaves}
A_a=\{(x_1,x_2,\chi_1,\chi_2,X)\in A^5; \Phi=a\}, \quad A_b=\{(y_1,y_2,\psi_1,\psi_2,Y)\in A^5; \Psi=b\}.
\end{equation}
  \item The bosonic limit of map $Y_{a,b}$ is the Adler-Yamilov map.
\end{enumerate}
\end{proposition}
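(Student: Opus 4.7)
The plan is to prove the three claims in order, each building on the previous one.

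For part (1), I would verify directly that $\Phi$ is invariant by computing $U - u_1 u_2 - \xi_1 \xi_2$ using the explicit formulas in (\ref{NLS5D}). Since $u_2 = y_2$ and $\xi_2 = \psi_2$, I have
\begin{equation*}
U - u_1 y_2 - \xi_1 \psi_2 = U - y_1 y_2 - \psi_1 \psi_2 + (Y-U)x_1 y_2 + (Y-U)\chi_1 \psi_2,
\end{equation*}
where the second equality uses (\ref{uxiveta-a}), so I write $u_1, \xi_1$ in the compact form $u_1 = (Y-U)x_1 + y_1$, $\xi_1 = (Y-U)\chi_1 + \psi_1$. Grouping gives $U - u_1 u_2 - \xi_1 \xi_2 = U - y_1 y_2 - \psi_1 \psi_2 + (Y-U)(x_1 y_2 + \chi_1 \psi_2)$. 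Now using $(1 + x_1 y_2 + \chi_1 \psi_2) U = X - x_1 x_2 - \chi_1 \chi_2 + (x_1 y_2 + \chi_1 \psi_2) Y + y_1 y_2 + \psi_1 \psi_2$ from the formula for $U$, one finds after cancellation that the right hand side equals $X - x_1 x_2 - \chi_1 \chi_2$, proving $\Phi$ is invariant. The argument for $\Psi$ is symmetric, using $V = X + Y - U$ from (\ref{eqs-ref-prob-M-a}) together with the formulas for $v_2, \eta_2$.

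For part (2), I substitute $X = a + x_1 x_2 + \chi_1 \chi_2$ and $Y = b + y_1 y_2 + \psi_1 \psi_2$ into the formulas (\ref{NLS5D}). The numerator $X - x_1 x_2 - \chi_1 \chi_2 - Y + y_1 y_2 + \psi_1 \psi_2$ collapses to $a - b$, so for instance
\begin{equation*}
u_1 = y_1 - \frac{a-b}{1 + x_1 y_2 + \chi_1 \psi_2}\, x_1.
\end{equation*}
The denominator is then rationalised by multiplying numerator and denominator by $1 + x_1 y_2 - \chi_1 \psi_2$; using $\chi_1^2 = \psi_2^2 = 0$, the denominator becomes $(1 + x_1 y_2)^2$, reproducing (\ref{Adler-Yamilov-Grassmann}). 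The same trick yields $v_2$. For the components $\xi_1$ and $\eta_2$, the formulas (\ref{NLS5D-chi1}) and (\ref{NLS5D-eta2}) already have denominator $1 + x_1 y_2$, and the numerators contain $\chi_1$ or $\psi_2$ outside; the numerator's $\psi_1 \psi_2$ and $\chi_1 \chi_2$ terms drop when multiplied by $\chi_1$ or $\psi_2$ via nilpotency, leaving the stated expression. One must also check that the restricted map preserves the level sets, but this is automatic from part (1). Finally, the Yang-Baxter property of (\ref{Adler-Yamilov-Grassmann}) on $A_a \times A_b$ is inherited from (\ref{NLS5D}): on the triple $A_a \times A_b \times A_c$ both sides of (\ref{YB_eq}) preserve the three first integrals, and project to the same map since (\ref{NLS5D}) is a Yang-Baxter map.

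For part (3), I simply set $\chi_1 = \chi_2 = \psi_1 = \psi_2 = 0$ in (\ref{Adler-Yamilov-Grassmann}); the terms $\chi_1 \psi_2$ and $b-a$ coefficients of $\chi_1, \psi_2$ vanish, reducing the map to the standard Adler-Yamilov map, in agreement with the remark following (\ref{NLS5D}).

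The main technical obstacle is the rationalisation of denominators containing a nilpotent odd contribution $\chi_1 \psi_2$; this is a small but crucial calculation that must be done carefully and consistently (as already used in the proof of the previous proposition), but once it is in hand all three claims reduce to direct substitution.
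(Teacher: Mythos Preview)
Your proposal is correct and follows essentially the same route as the paper's proof, which is extremely terse (``It can be readily verified\ldots'', ``The existence of the restriction is obvious'', etc.); you simply supply the intermediate computations the paper omits, including the rationalisation of the denominator $1+x_1y_2+\chi_1\psi_2$ already used in the preceding proposition. One minor slip: in your computation for part (1) the sign in front of $(Y-U)(x_1y_2+\chi_1\psi_2)$ should be negative, not positive---with the correct sign the identity $U(1+x_1y_2+\chi_1\psi_2)=X-x_1x_2-\chi_1\chi_2+(x_1y_2+\chi_1\psi_2)Y+y_1y_2+\psi_1\psi_2$ gives the desired cancellation exactly as you claim.
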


\begin{proof}
\begin{enumerate}
	\item It can be readily verified that (\ref{NLS5D}) implies $U-u_1u_2-\xi_1\xi_2=X-x_1x_2-\chi_1\chi_2$ and $V-v_1v_2-\eta_1\eta_2=Y-y_1y_2-\psi_1\psi_2$. Thus, $\Phi$ and $\Psi$ are invariants, i.e.  first integrals of the map.
	\item The existence of the restriction is obvious. Using the conditions $X=x_1x_2+\chi_1\chi_2+a$ and $Y=y_1y_2+\psi_1\psi_2+b$, one can eliminate $X$ and $Y$ from (\ref{NLS5D}). The resulting map, $\textbf{x}\rightarrow \textbf{u}(\textbf{x},\textbf{y})$, $\textbf{y}\rightarrow \textbf{v}(\textbf{x},\textbf{y})$, is given by (\ref{Adler-Yamilov-Grassmann}).
\item If one sets the odd variables of the above map equal to zero, namely $\chi_1=\chi_2=0$ and $\psi_1=\psi_2=0$, then the map (\ref{Adler-Yamilov-Grassmann}) coincides with the Adler-Yamilov map.
\end{enumerate}
\end{proof}

Now, one can use the condition $X=x_1x_2+\chi_1\chi_2+a$ to eliminate $X$ from the Lax matrix (\ref{mLaxNLS}), i.e.
\begin{equation} \label{laxSNLS}
M(\textbf{x};a,\lambda)=\left(
\begin{matrix}
 a+x_1x_2+\chi_1\chi_2+\lambda & x_1 & \chi_1\\
 x_2 & 1 &0 \\
 \chi_2 & 0 & 1
\end{matrix}\right),
\end{equation}
which corresponds to the Darboux matrix derived in \cite{Georgi-Sasha}.
Now, the Adler-Yamilov map's extension follows from the strong Lax representation
\begin{equation} \label{lax_eq_NLS}
  M(\textbf{u};a,\lambda)M(\textbf{v};b,\lambda)=M(\textbf{y};b,\lambda)M(\textbf{x};a,\lambda).
\end{equation}
Therefore, the extension of the Adler-Yamilov's map (\ref{Adler-Yamilov-Grassmann}) is a reversible parametric YB map. Moreover, it is easy to verify that it is not involutive. Birationality of map (\ref{Adler-Yamilov-Grassmann}) is due to Prop. \ref{rationality}.

To generate invariants of map (\ref{Adler-Yamilov-Grassmann}) we use $\str(M(\textbf{y};b,\lambda)M(\textbf{x};a,\lambda))$, and we obtain the following
\begin{eqnarray*}
&& T_1=x_1x_2+y_1y_2+\chi_1\chi_2+\psi_1\psi_2, \\
&& T_2=(a+x_1x_2+\chi_1\chi_2)(b+y_1y_2+\psi_1\psi_2)+x_1y_2+x_2y_1+\chi_1\psi_2-\chi_2\psi_1,
\end{eqnarray*}
where we have omitted the additive constants. However, $T_1$ and $T_2$ are linear combinations of the following invariants
\begin{subequations}\label{invariants}
\begin{align}
& I_1=x_1x_2+y_1y_2, \\
& I_2=\chi_1\chi_2+\psi_1\psi_2, \\
& I_3=\chi_1\chi_2\psi_1\psi_2,\\
& I_4=b(x_1x_2+\chi_1\chi_2)+a(y_1y_2+\psi_1\psi_2)+y_1y_2(x_1x_2+\chi_1\chi_2)+x_1x_2\psi_1\psi_2+\\
& ~\qquad     y_1x_2+y_2x_1+\chi_1\psi_2-\chi_2\psi_1.\nonumber
\end{align}
\end{subequations}

\subsection{DNLS case}
According to matrix $M(p,q_{10},\theta,\phi_{10};1,1)$ in (\ref{DarbouxZ2-M}) we consider the following matrix
\begin{equation} \label{DarbouxZ2}
M(\textbf{x};\lambda)=\lambda^2 \left(
\begin{matrix}
 X & 0 & 0\\
 0 & 0 & 0\\
 0 & 0 & 0
\end{matrix}\right)+\lambda \left(
\begin{matrix}
 0 & x_1 &  \chi_1 \\
 x_2 & 0 &0 \\
 \chi_2 & 0 & 0
\end{matrix}\right)
+\left(
\begin{matrix}
 1 & 0 & 0 \\
 0 & 1 &0 \\
 0 & 0 & 1
\end{matrix}\right),
\end{equation}
where $\textbf{x}=(x_1,x_2,\chi_1,\chi_2,X)$ and, in particular, we have set
\begin{equation}
X:=f,\quad x_1:=fp,\quad x_2=fq_{10},\quad \chi_1:=f\theta \quad \text{and}\quad \chi_2:=\psi_{10}f.
\end{equation}

In this case, the Lax equation implies the following equations
\begin{subequations}
\begin{eqnarray}
&U+V+u_1v_2+\xi_1\eta_2=Y+X+y_1x_2+\psi_1\chi_2,& \\
&u_2v_1=y_2x_1,\quad u_2\eta_1=y_2\chi_1,\quad \xi_2 v_1=\psi_2 x_1, \quad \xi_2\eta_1=\psi_2\chi_1,&\\
&Uv_1=Yx_1, \quad u_2 V=y_2 X, \quad U\eta_1=Y\chi_1, \quad \eta_2 X=\psi_2 X,& \\
&u_i+v_i=x_i+y_i, \qquad \xi_i+\eta_i=\chi_i+\psi_1, \quad i=1,2.&
\end{eqnarray}
\end{subequations}
As in the previous section, the algebraic variety consists of two components. The first ten-dimensional component corresponds to the permutation map
\begin{equation}
\textbf{x} \mapsto \textbf{u}=\textbf{y}, \qquad \textbf{y} \mapsto \textbf{v}=\textbf{x},
\end{equation}
and the second corresponds to the following ten-dimensional YB map

\begin{subequations}\label{Z25D}
\begin{align}
& \textbf{x}\, \mapsto \, \textbf{u}=\left(y_1+\frac{f(\textbf{x},\textbf{y})}{g(\textbf{x},\textbf{y})}x_1,\,\,\frac{g(\textbf{x},\textbf{y})}{h(\textbf{x},\textbf{y})}y_2,\,\,\psi_1+\frac{f(\textbf{x},\textbf{y})}{g(\textbf{x},\textbf{y})}\chi_1,\,\,\frac{g(\textbf{x},\textbf{y})}{h(\textbf{x},\textbf{y})}\psi_2,\,\,\frac{g(\textbf{x},\textbf{y})}{h(\textbf{x},\textbf{y})}Y\right), \\
& \textbf{y}\, \mapsto \, \textbf{v}=\left(\frac{h(\textbf{x},\textbf{y})}{g(\textbf{x},\textbf{y})}x_1,\,\,x_2+\frac{f(\textbf{y},\textbf{x})}{h(\textbf{x},\textbf{y})}y_2,\,\,\frac{h(\textbf{x},\textbf{y})}{g(\textbf{x},\textbf{y})}\chi_1,\,\,\chi_2+\frac{f(\textbf{y},\textbf{x})}{h(\textbf{x},\textbf{y})}\psi_2,\,\,\frac{h(\textbf{x},\textbf{y})}{g(\textbf{x},\textbf{y})}X \right).
\end{align}
\end{subequations}

where $f$, $g$ and $h$ are given by the following expressions
\begin{subequations}\label{f-g-h-Z2}
\begin{align}
\label{f-g-h-Z2-1}
&f(\textbf{x},\textbf{y})=X-x_1x_2-\chi_1\chi_2-Y+y_1y_2+\psi_1\psi_2,\\
\label{f-g-h-Z2-2}
&g(\textbf{x},\textbf{y})=X-x_1(x_2+y_2)-\chi_1(\chi_2+\psi_2),\\
\label{f-g-h-Z2-3}
&h(\textbf{x},\textbf{y})=Y-(x_1+y_1)y_2-(\chi_1+\psi_1)\psi_2.
\end{align}
\end{subequations}

\subsection{Restriction on invariant leaves}
In this section, we show that the map given by (\ref{Z25D})-(\ref{f-g-h-Z2}) can be restricted to a completely integrable eight-dimensional YB map on invariant leaves. As in the previous section, the idea of this restriction is motivated by the existence of the first integral (\ref{G-z2-1stInt}).

Particularly, we have the following.
\begin{proposition}
\begin{enumerate}
	\item $\Phi =X-x_1x_2-\chi_1\chi_2$ and $\Psi=Y-y_1y_2-\psi_1\psi_2$ are invariants of the map (\ref{Z25D})-(\ref{f-g-h-Z2}).
	\item The ten-dimensional map (\ref{Z25D})-(\ref{f-g-h-Z2}) can be restricted to an eight-dimensional map, $Y_{a,b}\in\End\{A_a\times A_b\}$, given by
	\begin{subequations}\label{Z2-Grassmann}
\begin{align}
 x_1\, \mapsto \, u_1&=y_1+\frac{(a-b)(a-x_1y_2+\chi_1\psi_2)}{(a-x_1 y_2)^2}x_1, \\
 x_2\, \mapsto \, u_2&=\frac{(a-x_1y_2-\chi_1\psi_2)(b-x_1y_2+\chi_1\psi_2)}{(b-x_1y_2)^2}y_2, \\
 \chi_1\, \mapsto \, \xi_1&=\psi_1+\frac{a-b}{a-x_1y_2}\chi_1, \\
 \chi_2\, \mapsto \, \xi_2&=\frac{a-x_1y_2}{b-x_1y_2}\psi_2, \\
 y_1\, \mapsto \, v_1&=\frac{(b-x_1y_2-\chi_1\psi_2)(a-x_1y_2+\chi_1\psi_2)}{(a-x_1y_2)^2}x_1, \\
 y_2\, \mapsto \, v_2&=x_2+\frac{(b-a)(b-x_1y_2+\chi_1\psi_2)}{(b-x_1y_2)^2}y_2, \\
 \psi_1\, \mapsto \, \eta_1&=\frac{b-x_1y_2}{a-x_1y_2}\chi_1, \\
 \psi_2\, \mapsto \, \eta_2&=\chi_2+\frac{b-a}{b-x_1y_2}\psi_2,
\end{align}
\end{subequations}	
where $a,b\in G_0$ and $A_a$, $A_b$ are given by (\ref{symleaves}).
\item The bosonic limit of the above map is the four-dimensional YB map associated to the DNLS equation.
\end{enumerate}
\end{proposition}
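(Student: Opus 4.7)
The plan is to verify the three claims by direct computation, leveraging the algebraic structure of $f$, $g$, $h$ defined in (\ref{f-g-h-Z2}) together with the Grassmann identities $\chi_i^2 = 0$, $\psi_i^2 = 0$, and their consequence $(\chi_1 \psi_2)^2 = 0$. The overall strategy mirrors the treatment of the NLS case in the previous subsection.

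For part (i), I first rewrite the auxiliary quantities in terms of $\Phi$ and $\Psi$: from (\ref{f-g-h-Z2}) one reads off $f = \Phi - \Psi$, $g = \Phi - (x_1 y_2 + \chi_1 \psi_2)$ and $h = \Psi - (x_1 y_2 + \chi_1 \psi_2)$, so in particular $f + h = g$ and $f + \Psi = \Phi$. Then I compute $U - u_1 u_2 - \xi_1 \xi_2$ directly from (\ref{Z25D}): using $U = (g/h) Y$, $u_1 u_2 = (g/h) y_1 y_2 + (f/h) x_1 y_2$ and $\xi_1 \xi_2 = (g/h) \psi_1 \psi_2 + (f/h) \chi_1 \psi_2$ (where the even factor $g/h$ passes through the odd factors $\chi_1$, $\psi_2$ without sign), collecting terms gives $U - u_1 u_2 - \xi_1 \xi_2 = (g/h)\Psi - (f/h)(\Phi - g)$, which on using $f + h = g$ and $f + \Psi = \Phi$ simplifies to $\Phi$. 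The invariance of $\Psi$ follows analogously from $V - v_1 v_2 - \eta_1 \eta_2$.

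For part (ii), the invariance just established implies that $A_a \times A_b$ is mapped into itself, so on the level set one may substitute $X = a + x_1 x_2 + \chi_1 \chi_2$ and $Y = b + y_1 y_2 + \psi_1 \psi_2$. This reduces $f$, $g$, $h$ to $f = a - b$, $g = a - (x_1 y_2 + \chi_1 \psi_2)$ and $h = b - (x_1 y_2 + \chi_1 \psi_2)$, and eliminates $X, Y, U, V$ from the map. To put the result into the explicit form (\ref{Z2-Grassmann}), I rationalise each fraction $1/g$ and $1/h$ by multiplying numerator and denominator by the conjugates $a - x_1 y_2 + \chi_1 \psi_2$ and $b - x_1 y_2 + \chi_1 \psi_2$; since $(\chi_1 \psi_2)^2 = 0$, the denominators collapse to $(a - x_1 y_2)^2$ and $(b - x_1 y_2)^2$. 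Further simplifications in the numerators exploit $\chi_1^2 = 0$ and $\psi_2^2 = 0$: for instance in $\xi_1$ the term $(f/g)\chi_1$ produces a factor $\chi_1 \psi_2 \chi_1 = 0$, collapsing the numerator to $(a-b)(a - x_1 y_2)\chi_1$ and hence $\xi_1 = \psi_1 + \frac{a-b}{a - x_1 y_2}\chi_1$; the remaining seven components are handled by the same two simplification mechanisms.

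For part (iii), the bosonic limit is immediate: setting $\chi_1 = \chi_2 = \psi_1 = \psi_2 = 0$ in (\ref{Z2-Grassmann}) annihilates every odd contribution and the rational expressions reduce to the four-dimensional YB map for the DNLS equation obtained in \cite{Sokor-Sasha}. The main obstacle in the whole argument is purely bookkeeping: one has to keep track carefully of Grassmann parities while reordering factors and rationalising ratios, distinguishing the two annihilation mechanisms $(\chi_1 \psi_2)^2 = 0$ (which clears denominators) and $\chi_i^2 = \psi_i^2 = 0$ (which collapses individual numerators), and verifying that the resulting even-valued denominators $(a - x_1 y_2)^2$ and $(b - x_1 y_2)^2$ are non-nilpotent, as required by the remark following Proposition \ref{rationality}.
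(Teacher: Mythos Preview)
Your proposal is correct and follows essentially the same approach as the paper's own proof, namely direct verification of the invariance $U-u_1u_2-\xi_1\xi_2=\Phi$ (and its $\Psi$-counterpart), followed by substitution of the level-set conditions and the bosonic limit. The paper's proof is in fact far terser---it merely asserts each step ``by direct verification''---whereas you supply the explicit algebraic identities $f+h=g$, $f+\Psi=\Phi$ and carry out the Grassmann rationalisation by conjugates, exactly in the spirit of what the paper did earlier for the NLS case.
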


\begin{proof}
\begin{enumerate}
	\item Map (\ref{Z25D})-(\ref{f-g-h-Z2}) implies $U-u_1u_2-\xi_1\xi_2=X-x_1x_2-\chi_1\chi_2$ and $V-v_1v_2-\eta_1\eta_2=Y-y_1y_2-\psi_1\psi_2$. Therefore, $\Phi$ and $\Psi$ are first integrals of the map.
	\item The conditions $X=x_1x_2+\chi_1\chi_2+a$ and $Y=y_1y_2+\psi_1\psi_2+b$ define the level sets, $A_a$ and $A_b$, of $\Phi$ and $\Psi$, respectively. Using these conditions, we can eliminate $X$ and $Y$ from map (\ref{Z25D})-(\ref{f-g-h-Z2}). The resulting map, $Y_{a,b}:A_a\times A_b \longrightarrow A_a\times A_b$, is given by (\ref{Z2-Grassmann}).
  \item Setting the odd variables in (\ref{Z2-Grassmann}) equal to zero, we obtain map (4.37) in \cite{Sokor-Sasha}.	
\end{enumerate}
\end{proof}

Now, using condition $X = x_1x_2+\chi_1\chi_2+a$, matrix (\ref{DarbouxZ2}) takes the following form
\begin{equation} \label{DarbouxLaxZ2}
M=\lambda^2 \left(
\begin{matrix}
 k+x_1x_2+\chi_1\chi_2 & 0 & 0\\
 0 & 0 & 0\\
 0 & 0 & 0
\end{matrix}\right)+\lambda \left(
\begin{matrix}
 0 & x_1 & \chi_1 \\
 x_2 & 0 &0 \\
 \chi_2 & 0 & 0
\end{matrix}\right)
+\left(
\begin{matrix}
 1 & 0 & 0 \\
 0 & 1 &0 \\
 0 & 0 & 1
\end{matrix}\right).
\end{equation}
Map (\ref{Z2-Grassmann}) has the following strong Lax representation
\begin{equation} \label{lax_eq_Z2}
  M(\textbf{u};a,\lambda)M(\textbf{v};b,\lambda)=M(\textbf{y};b,\lambda)M(\textbf{x};a,\lambda).
\end{equation}
Therefore, it is reversible parametric YB map which is birational due to Prop. \ref{rationality}. It can also be verified that it is not involutive.

Regarding the invariants of map (\ref{Z2-Grassmann}), the ones which we retrieve from $\str(M(\textbf{y};b,\lambda)M(\textbf{x};a,\lambda))$ are
\begin{eqnarray*}
&&K_1=(a+x_1x_2+\chi_1\chi_2)(b+y_1y_2+\psi_1\psi_2) \\
&&K_2=x_1x_2+y_1y_2+x_1y_2+x_2y_1+\chi_1\chi_2+\psi_1\psi_2+\chi_1\psi_2-\chi_2\psi_1,
\end{eqnarray*}
where we have omitted the additive constants. However, $K_1$ is sum of the following quantities
\begin{subequations}
\begin{align}
I_1&=(a+x_1x_2)(y_1y_2+\psi_1\psi_2)+b(x_1x_2+\chi_1\chi_2)+y_1y_2\chi_1\chi_2\\
I_2&=\chi_1\chi_2\psi_1\psi_2,
\end{align}
\end{subequations}
which are invariants themselves. Moreover, $K_2$ is sum of the following invariants
\begin{equation}
I_3=(x_1+y_1)(x_2+y_2) \qquad \text{and} \qquad I_4=(\chi_1+\psi_1)(\chi_2+\psi_2).
\end{equation}
In fact, the quantities $C_i=x_i+y_i$ and $\Omega_i=\chi_i+\psi_i$, $i=1,2$, are invariants themselves.

\subsection{Vector generalisations: $4N\times4N$ maps}
In what follows we use the following notation for a vector $\textbf{w}=(w_1,...,w_{4n})$
\begin{eqnarray}
&& \textbf{w}=(\textbf{w}_1,\textbf{w}_2,\textbf{$\boldsymbol\omega$}_1,\textbf{$\boldsymbol\omega$}_2),\qquad \text{where} \qquad \textbf{w}_1=(w_1,...,w_N), \qquad \textbf{w}_2=(w_{N+1},...,w_{2N})\nonumber \\
&& \text{and} \qquad \textbf{$\boldsymbol\omega$}_1=(w_{2N+1},...,w_{3N}), \qquad \textbf{$\boldsymbol\omega$}_2=(w_{3N+1},...,w_{4N}), \nonumber
\end{eqnarray}
where $\textbf{w}_1$ and $\textbf{w}_2$ are even and $\textbf{$\boldsymbol\omega$}_1$ and $\textbf{$\boldsymbol\omega$}_2$ are odds.
Also,
\begin{equation}
\langle u_i|:=\textbf{u}_i,\qquad |w_i\rangle:=\textbf{w}_i^T \qquad \text{and their dot product with}\qquad \langle u_i,w_i\rangle.
\end{equation}

\subsection{NLS case}
Now, we replace the variables in map (\ref{Adler-Yamilov-Grassmann}) with $N-$vectors, namely we consider the following $4N\times 4N$ map
\begin{subequations}\label{VectorNLS-G}
\begin{align}
\begin{cases}
\langle u_1|=\langle y_1|+f(z;a,b)\langle x_1|(1+\langle x_1,y_2\rangle-\langle \chi_1,\psi_2\rangle),\\
\langle u_2|=\langle y_2|, \\
\langle \xi_1|=\langle \psi_1|+f(z;a,b)\langle \chi_1|(1+\langle x_1,y_2\rangle-\langle \chi_1,\psi_2\rangle), \\
\langle \xi_2|=\langle \psi_2|,
\end{cases}
\\
\intertext{and} \nonumber
\\
\begin{cases}
\langle v_1|=\langle x_1|,\\
\langle v_2|=\langle x_2|+f(z;b,a)\langle y_2|(1+\langle x_1,y_2\rangle-\langle \chi_1,\psi_2\rangle),\\
\langle \eta_1|=\langle \chi_1| \\
\langle \eta_2|=\langle \chi_2|+f(z;b,a)\langle \psi_2|(1+\langle x_1,y_2\rangle-\langle \chi_1,\psi_2\rangle)
\end{cases}
\end{align}
\end{subequations}
where $f$ is given by
\begin{equation}\label{f-g-h-VNLS}
f(z;b,a)=\frac{b-a}{(1+z)^2},\qquad z:=\langle x_1,y_2\rangle.
\end{equation}

Map (\ref{VectorNLS-G})-(\ref{f-g-h-VNLS}) is a reversible parametric YB map, for it has the following strong Lax-representation
\begin{equation}\label{LaxRep}
M(\textbf{u};a)M(\textbf{v};b)=M(\textbf{y};b)M(\textbf{x};a)
\end{equation}
where
\begin{equation}
M(\textbf{w};a)=\left(
\begin{matrix}
\lambda+a+\langle w_1,w_2\rangle+\langle \omega_1,\omega_2\rangle  & \langle w_1| & \langle \omega_1| \\
|w_2\rangle        \\
|\omega_2\rangle      & \quad \qquad \qquad I_{2N-1} &
\end{matrix}\right).
\end{equation}
Moreover, map (\ref{VectorNLS-G})-(\ref{f-g-h-VNLS}) is birational and not involutive.

The invariants of this map are given by
\begin{subequations}
\begin{align}
 K_1&=a+b+\langle x_1,x_2\rangle+\langle y_1,y_2\rangle+\langle \chi_1,\chi_2\rangle+\langle \psi_1,\psi_2\rangle, \\
 K_2&=b(\langle x_1,x_2\rangle+\langle \chi_1,\chi_2\rangle)+a(\langle y_1,y_2\rangle+\langle \psi_1,\psi_2\rangle)+(\langle y_1,y_2\rangle+\langle \psi_1,\psi_2\rangle)(\langle x_1,x_2\rangle+\langle \chi_1,\chi_2\rangle).
\end{align}
\end{subequations}
The quantities
\begin{equation}
I_1=\langle x_1,x_2\rangle+\langle \chi_1,\chi_2\rangle,\quad\text{and}\quad I_2=\langle y_1,y_2\rangle+\langle \psi_1,\psi_2\rangle,
\end{equation}
are invariant themselves, while $K_2$ can be written as sum of the following
\begin{subequations}
\begin{align}
I_3=&\langle \chi_1,\chi_2\rangle \langle \psi_1,\psi_2\rangle, \\
I_4=&b(\langle x_1,x_2\rangle+\langle \chi_1,\chi_2\rangle)+a(\langle y_1,y_2\rangle+\langle \psi_1,\psi_2\rangle)+\\
&(\langle y_1,y_2\rangle+\langle \psi_1,\psi_2\rangle)\langle x_1,x_2\rangle+\langle y_1,y_2\rangle\langle \chi_1,\chi_2\rangle.\nonumber
\end{align}
\end{subequations}

\subsection{DNLS case}
Now, replacing the variables in (\ref{Z2-Grassmann}) with $N$-vectors we obtain the following $4N\times 4N$-dimensional map
\begin{subequations}\label{vectorZ2-G}
\begin{align}
\begin{cases}
\langle u_1|&=\langle y_1|+h(z;a,b)\langle x_1|(a-\langle x_1,y_2\rangle+\langle \chi_1,\psi_2\rangle), \\
\langle u_2|&=g(z;a,b)\langle y_2|-h(z;b,a)\langle \chi_1,\psi_2\rangle \langle y_2|,\\
\langle \xi_1|&=\langle \psi_1|+f(z;a,b)\langle \chi_1|,\\
\langle \xi_2|&=g(z;a,b)\langle \psi_2|,
\end{cases}
\\
\intertext{and}\nonumber
\begin{cases}
\langle v_1|&=g(z;b,a)\langle x_1|-h(z;a,b)\langle x_1|,\\
\langle v_2|&=\langle x_2|+h(z;b,a)\langle y_2|(b-\langle x_1,y_2\rangle+\langle \chi_1,\psi_2\rangle),\\
\langle \eta_1|&=g(z;b,a)\langle \chi_1|,\\
\langle \eta_2|&=\langle \chi_2|+f(z;b,a)\langle \psi_2|,
\end{cases}
\end{align}
\end{subequations}
where $f$, $g$ and $h$ are given by
\begin{equation}\label{f-g-h-VZ2}
f(z;a,b)=\frac{a-b}{a-z},\qquad g(z;a,b)=\frac{a-z}{b-z},\qquad h(z;a,b)=\frac{a-b}{(a-z)^2}, \qquad z:=<x_1,y_2>.
\end{equation}

Map (\ref{vectorZ2-G})-(\ref{f-g-h-VZ2}) is reversible parametric YB map, as it has the strong Lax-representation (\ref{LaxRep}) where
\begin{equation} \label{DarbouxVectorLaxZ2}
M= \left(
\begin{matrix}
 \lambda^2(k+\langle x_1,x_2\rangle+\langle \chi_1,\chi_2\rangle) & \lambda \langle x_1|  & \lambda \langle \chi_1|\\
 \lambda |x_2\rangle &  & \\
 \lambda |\chi_2\rangle & \qquad \quad I_{2N} &
\end{matrix}\right).
\end{equation}
Moreover, it is a non-involutive map and birational.

The invariants we retrieve from the supertrace\index{supertrace} of the monodromy matrix\index{monodromy matrix} are given by
\begin{eqnarray*}
&&K_1=(a+\langle x_1,x_2\rangle+\langle\chi_1,\chi_2\rangle)(b+\langle y_1,y_2\rangle+\langle\psi_1,\psi_2\rangle) \\
&&K_2=\langle x_1,x_2\rangle+\langle y_1,y_2\rangle+\langle x_1,y_2\rangle+\langle x_2,y_1\rangle+\langle\chi_1,\chi_2\rangle+\langle\psi_1,\psi_2\rangle+\\
&&\qquad~~\langle\chi_1,\psi_2\rangle-\langle\chi_2\psi_1\rangle,
\end{eqnarray*}
where we have omitted the additive constants. In fact, $K_2$ is a sum of the following invariants
\begin{equation}
I_1=\langle x_1+y_1,x_2+y_2\rangle,\qquad I_2=\langle\chi_1+\psi_1,\chi_2+\psi_2 \rangle,
\end{equation}
and the vectors in the above dot products are invariant themselves, namely
\begin{equation}
\langle x_1+y_1|,\qquad \langle x_2+y_2|,\qquad \langle\chi_1+\psi_1|,\qquad \text{and} \qquad \langle \chi_2+\psi_2|,
\end{equation}
are invariants.

\section{Conclusions}
We showed that there are explicit examples of birational endomorphisms of Grassmann algebraic varieties which possess the Yang-Baxter property. These YB maps are related to noncommutative versions of integrable PDEs via their Lax representations which consist of Darboux matrices for these PDEs. Specifically, we considered the cases of the Grassmann extensions of Darboux transformations corresponding to
\begin{enumerate}
	\item the NLS equation;
	\item the DNLS equation.
\end{enumerate}
In the former case a Darboux transformation appeared in \cite{Georgi-Sasha} and, here, we constructed a Darboux transformation for the latter case. Employing the associated Darboux matrices we derived ten-dimensional maps, which we restricted on invariant leaves to eight-dimensional birational parametric YB maps. The motivation for these restrictions was the fact that the entries of the associated Darboux matrices satisfy particular systems of differential-difference equations which possess first integrals. The latter indicated the invariant leaves. In the case of the NLS equation the derived eight-dimensional YB map, namely map (\ref{Adler-Yamilov-Grassmann}), is the Grassmann extension of the Adler-Yamilov map, while in the case of the DNLS equation the result is a novel eight-dimensional YB map, map (\ref{Z2-Grassmann}), which, at the bosonic limit, is equivalent to a four-dimensional YB map which appeared recently in \cite{Sokor-Sasha}. Moreover, we considered the vector generalisations of these eight-dimensional maps.

Our results could be extended in several ways.
\begin{enumerate}[i]
  \item find the Poisson structure of the eight-dimensional YB maps;
	\item study the case of the Lax operator with $\field{D}_2$ symmetries;
  \item study the corresponding noncommutative entwining systems;
	\item study the transfer dynamics of all the Grassmann extended YB maps and the entwining systems associated to the Grassmann extended Darboux matrices.
\end{enumerate}
Regarding (i) one needs to find the even-odd Poisson brackets \cite{Berezin} for the maps (\ref{Adler-Yamilov-Grassmann}) and (\ref{Z2-Grassmann}). In the case of map (\ref{Z2-Grassmann}), the invariants $C_i$ and $\Omega_i$ will be Casimirs for the associated Poisson bracket. For (ii), in \cite{SPS} we studied the Darboux transformations in the case of Lax operators which are invariant under the action of the $\field{D}_2$ reduction group, whereas in \cite{Sokor-Sasha} we studied the associated YB maps. The Grassmann extension of these Darboux transformations and their associated YB maps in this case is an open problem. With regards to (iii), one can consider Lax triples of Darboux matrices with even and odd entries. Finally, concerning (iv), one can consider the transfer maps for the $n$-periodic problem as defined in \cite{KouloukasBanach}.

\section*{Acknowledgements}
The authors would like to thank Vassilis Papageorgiou for the private discussion, and Hovhannes Khudaverdian for the discussion and his useful comments. G.G.G and A.V.M. acknowledge support from Leverhulme Trust. The work of A.V.M. is partially supported by the EPSRC (Grant EP/I038675/1 is acknowledged). S.K.R acknowledges University of Leeds' William Wright Smith scholarship and would like to thank J.E. Crowther for the scholarship-contribution to fees. The work of S.K.R was finalised at the ``Laboratory of Applied Mathematics \& Computer Technology" of the Faculty of Applied Mathematics \& Computer Technology, Chechen State University, Russia.

\end{document}